\documentclass[letterpaper, 10 pt, conference]{ieeeconf}  %

\IEEEoverridecommandlockouts                              %
\overrideIEEEmargins
\usepackage{amsmath,amssymb,amsfonts}
\usepackage{bm,bbm,siunitx,hyperref,graphicx}
\usepackage{float}	%
\usepackage{xcolor}
\usepackage{mathtools, nccmath}

\usepackage{enumitem}
\usepackage{makecell}
\usepackage{cellspace}
\usepackage{cite}
\usepackage{balance}   %

\usepackage{amsthm}	%

\newtheorem{theorem}{Theorem}[section]

\DeclareSIUnit \ampereHour {Ah}
\DeclareSIUnit \wattHour {Wh}

\DeclareSIPostPower\powerThreeHalfs{\frac{3}{2}}

\newcommand{\mrb}[1]{\left( #1 \right)} %
\newcommand{\mcb}[1]{\left\{ #1 \right\}} %

\newcommand{\commentOut}[1]{}

\title{\LARGE \bf
Staging energy sources to extend flight time of a multirotor UAV
}

\author{
Karan P. Jain$^1$, Jerry Tang$^1$, Koushil Sreenath$^2$, and Mark W. Mueller$^1$%
\thanks{The authors are with the $^1$High Performance Robotics Lab and $^2$Hybrid Robotics group, Dept. of Mechanical Engineering, UC Berkeley, CA 94720, USA.
{\tt\small \{karanjain, jerrytang, koushils, mwm\}@berkeley.edu}} }%
\begin{document}

\maketitle

\begin{abstract}
Energy sources such as batteries do not decrease in mass after consumption, unlike combustion-based fuels.
We present the concept of staging energy sources, i.e. consuming energy in stages and ejecting used stages, to progressively reduce the mass of aerial vehicles in-flight which reduces power consumption, and consequently increases flight time.
A flight time vs. energy storage mass analysis is presented to show the endurance benefit of staging to multirotors.
We consider two specific problems in discrete staging -- optimal order of staging given a certain number of energy sources, and optimal partitioning of a given energy storage mass budget into a given number of stages.
We then derive results for two continuously staged cases -- an internal combustion engine driving propellers and a rocket engine.
Notably, we show that a multicopter powered by internal combustion has an upper limit on achievable flight time independent of the available fuel mass, but no such limit exists for rocket propulsion.
Lastly, we validate the analysis with flight experiments on a custom two-stage battery-powered quadcopter.
This quadcopter can eject a battery stage after consumption in-flight using a custom-designed mechanism, and continue hovering using the next stage.
The experimental flight times match well with those predicted from the analysis for our vehicle.
We achieve a 19\% increase in flight time using the batteries in two stages as compared to a single stage.
\end{abstract}

\section{Introduction}\label{sec:intro}

The ability to fly as a compact machine has given rise to the use of unmanned aerial vehicles (UAVs) in several applications such as surveillance, mapping, delivery, and search and rescue missions \cite{siebert2014mobile, thiels2015use, erdelj2017help}.
Multirotor UAVs are also being considered for exploration of other worlds \cite{grip2018guidance, lorenz2018dragonfly}.
A fundamental limitation of most UAVs is their flight time -- they must land when their energy source is depleted.
There is a growing demand for higher endurance and range in UAVs with their increasing usage in research and industrial settings.

Innovative approaches have been explored to increase the endurance and range of UAVs.
Broadly, we can classify the approaches into two types: assisted and unassisted.
Assisted methods typically involve the use of fixed stations or mobile vehicles for the replacement or the charging of energy sources.
Battery swapping at a fixed ground station has been presented in \cite{lee2015autonomous, ure2014automated}, and on a mobile ground base has been shown in \cite{barrett2018autonomous}.
Flying replacement batteries to a multirotor using other multirotors is discussed in \cite{jain2019flying}.

Unassisted methods typically involve increasing mechanical or electrical efficiency or the use of optimization-based methods over objectives such as flight time or range.
One such approach is exploiting the efficiency of a fixed-wing and hovering ability of a multirotor by converting them into a hybrid aerial vehicle \cite{saeed2018survey}.
Manipulation of vehicle structure by tilting rotors to increase efficiency is shown in \cite{holda2018tilting}.
An online strategy for optimizing efficiency by altering flight parameters (e.g. speed) over a trajectory is presented in \cite{tagliabue2019model}.
The analysis of a solar-powered UAV, which can potentially fly for a long time, is shown in \cite{reinhardt1996solar}.
However, such a vehicle requires a large wing-span, is sensitive to weather conditions, and is not suitable for indoor settings.

\begin{figure}
	\centering
	\includegraphics[width=0.69\columnwidth]{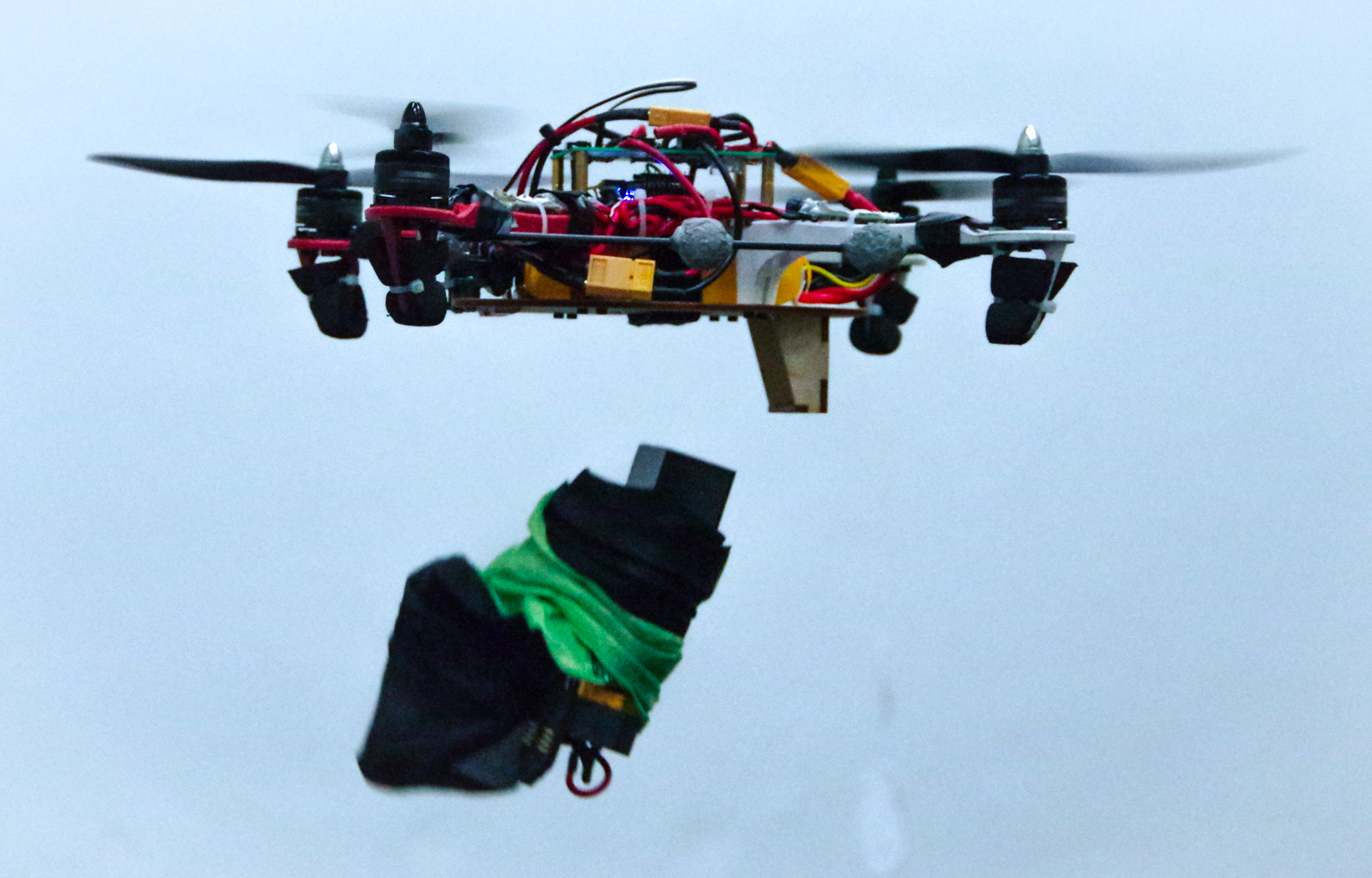}
	\includegraphics[width=0.275\columnwidth]{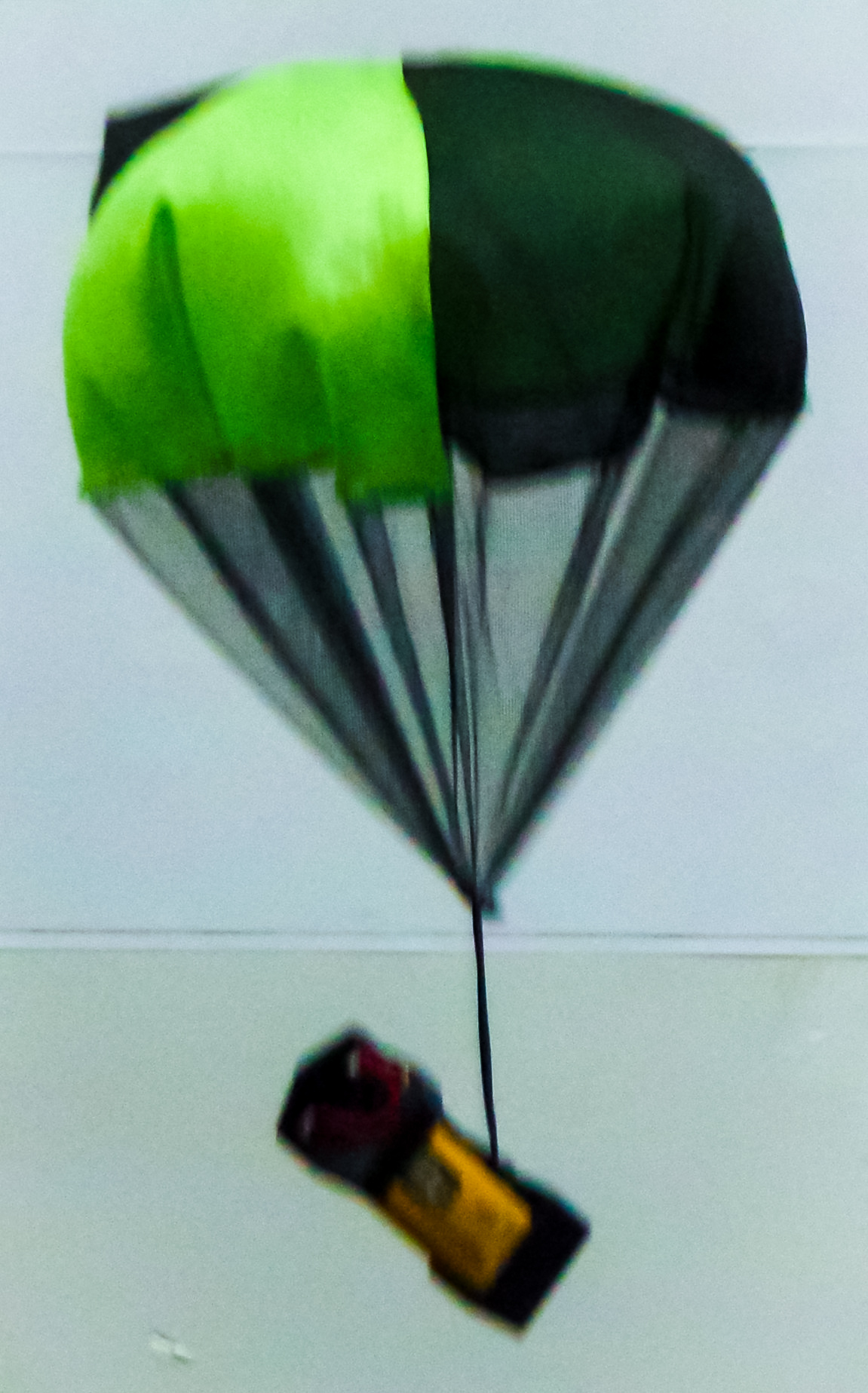}
	\caption{
		(Left:) Quadcopter dropping its first stage after use.
		A parachute is attached to the ejected stage to avoid damage to the battery or the surroundings.
		(Right:) Ejected stage with the deployed parachute.}
	\label{fig:batteryDrop}
\end{figure}

An unassisted approach which is largely unexplored for UAVs is the staging of energy sources.
A possibility of ejecting depleted energy sources is mentioned in \cite{johanningcelldrone}, which presents a modular drone consisting of drone cells, which can potentially be staged.
Energy sources such as batteries are used widely on UAVs, primarily because they are easy to recharge and reuse, and do not produce exhaust, making them a good candidate for indoor settings.
Electric powertrains also have fewer moving parts, making maintenance easy.
Batteries, however, have a clear disadvantage as opposed to combustion fuels, in the sense that the consumed portion still remains as a mass on the vehicle.
This has a significant impact because in the case of electric vehicles, battery mass accounts for a notable fraction of the total mass.
We draw inspiration from multi-stage rockets \cite{hall1958optimization, geckler1960ideal, gray1965cost} which are designed for missions that would otherwise require a much larger single-stage rocket.
We note that rocket staging is about discretely staging the structure, not the fuel.

The contribution of this paper is to analyze, explore, and experimentally validate this energy staging concept for multirotors.
The design space for multirotors with various discrete staging configurations is explored by analyzing the effect on flight time with respect to the energy storage mass.
Flight time performance is also analyzed for continuously staged sources such as an internal combustion engine driving propellers or reaction engines.
Using energy sources in multiple stages as opposed to a single stage is shown to benefit flight time both analytically and experimentally.
Lastly, the environmental impacts and use cases of staging are assessed to realize specialized areas where benefits of staging can be exploited, for example in industrial settings, emergency situations, or for interplanetary exploration missions.

\newcommand{\flightTime}{T_{f}}
\newcommand{\flightTimeArg}[1]{T_{f,#1}}

\newcommand{\flightTimeEqual}{T_\mathrm{flight,eq}}
\newcommand{\flightTimeStage}{T_\mathrm{flight}(k)}
\newcommand{\batteryEnergyStage}{e_k}

\newcommand{\dryMass}{m_d}

\newcommand{\batteryMassTotal}{m_b}
\newcommand{\batteryMassRate}{\dot{m}_b}

\newcommand{\batteryMassTotalRatio}{\phi}
\newcommand{\batteryMassStageArg}[1]{m_{#1}}
\newcommand{\batteryMassStage}{m_k}
\newcommand{\batteryMassRatioStagek}{\theta_k}
\newcommand{\batteryMassRatioStagei}{\theta_i}
\newcommand{\totalMass}{m_\Sigma}
\newcommand{\powerMassRatio}{k_p}

\newcommand{\obj}{J}
\newcommand{\flightTimeA}{T_\mathrm{flight,a}}
\newcommand{\flightTimeB}{T_\mathrm{flight,b}}

\newcommand{\powerElectric}{p_\mathrm{elec}}
\newcommand{\powerAero}{p_\mathrm{aero}}
\newcommand{\power}{p}
\newcommand{\powerConst}{c_p}
\newcommand{\flightTimeConst}{c_T}

\newcommand{\energy}{E}
\newcommand{\specificEnergy}{e_b}
\newcommand{\exhaustVelocity}{v_e}

\section{Staging Analysis}\label{sec:staging}

In this section, we present an analysis of how using an energy source in stages is beneficial for the flight time performance of hovering multirotors.
A brief review is first given of the actuator disk model for power consumption of a propeller; this model is then applied to quantify the gain in flight time by discarding the used energy storage stages in-flight.
This is then compared to two cases where an infinite number of stages exist -- first when the energy source is the fuel combusted (and then exhausted) in an engine that drives rotors, and secondly where the rotors themselves are replaced by rocket motors.

\subsection{Rotor power consumption and flight time}
For an actuator disk model of a propeller that is not translating in the ambient air, the aerodynamic power $\powerAero$ required may be computed \cite{mccormick1995aerodynamics} as
\begin{equation}
  \powerAero = \frac{f^\frac{3}{2}}{\sqrt{2\rho A_p}}
\end{equation}
where $f$ is the thrust produced, $\rho$ is the density of the air, and $A_p$ is the area swept by the rotor. 
This derivation assumes that the flow is inviscid, incompressible, and follows from applying conservation of mass and energy to a control volume containing the propeller. 
Note that translating propellers have more complex relationships, see e.g. \cite{tagliabue2019model}.

The actual power drawn from the energy source to the system will include additional losses, including aerodynamic losses which may be captured in a propeller's figure of merit \cite{mccormick1995aerodynamics} as well as losses in transmission of power to the propeller (e.g. gears or electric resistance). 
We make the simplifying assumption that these losses are all proportional to the power drawn, so that the actual power consumption of a propeller $i$ producing thrust $f_i$ may be captured by
\begin{equation}
\power_i = \powerConst f_i^\frac{3}{2}
\end{equation}
where the constant $\powerConst$ is a function of the propeller size, ambient air density, propeller figure of merit, and powertrain efficiency. 
Note that a similar relationship can be derived from the mechanical power required to drive the propeller shaft under the assumption that the thrust is proportional to the rotational speed of the propeller squared, and that the propeller torque is proportional to the propeller thrust \cite{holda2018tilting}.

For a symmetric quadcopter of mass $m$ to hover, each propeller must produce a force equal to a quarter of the vehicle's weight, and the total power consumption $\power$ is
\begin{equation}
\power = \sum_i \power_i = \frac{1}{2} \powerConst g^\frac{3}{2} m^\frac{3}{2} 
\label{eq:PowerFromMass}
\end{equation}
where the local acceleration due to gravity is given by $g$. 
Let $\specificEnergy$ be the specific energy of the energy source, so the energy budget $\energy$ for a given energy source mass $m_\energy$ is $\energy=\specificEnergy m_\energy$.
Then, at constant total mass $m$, a vehicle with energy source $\energy$ can hover for a flight time of
\begin{equation}
\flightTime = \frac{\energy}{\power} = \specificEnergy \flightTimeConst m_\energy m^{-\frac{3}{2}} \label{eqFlightTimeFromMass}
\end{equation}
where $\flightTimeConst = 2 \powerConst^{-1} g^{-\frac{3}{2}}$ is a vehicle-specific constant, and is specifically independent of the vehicle mass.
In all the subsequent analysis, we assume that all energy storage stages have the same specific-energy.
We note that the energy content need not be strictly proportional to the mass due to additional components such as packaging and connectors.
However, for many commercial batteries, it is shown in \cite{abdilla2015power} that this simplifying assumption is reasonable.

\subsection{Discrete energy storage stages}
A primary disadvantage of batteries for energy storage is that the storage mass does not decrease as the chemical energy is depleted (unlike, for example, a combustion engine). 
From \eqref{eq:PowerFromMass}, it is clear that the power consumption would be reduced if the total mass of the vehicle could be reduced, for example by ejecting parts of the battery as it is depleted.

For a vehicle with $N$ battery stages, where the $i$th battery has mass $\batteryMassStageArg{i}$, the total mass of the vehicle at stage $i$ is given by $\dryMass+\sum_{j=i}^N\batteryMassStageArg{j}$, where $\dryMass$ is the dry mass of the vehicle, i.e. mass of all components of the vehicle excluding energy storage (but including payload, etc.). 
In particular, at stage $i$, the mass $\sum_{j=1}^{i-1}\batteryMassStageArg{j}$ has been ejected.
The flight time for the $i$th stage can be computed from \eqref{eqFlightTimeFromMass} as
\begin{equation}
\flightTimeArg{i} = \specificEnergy \flightTimeConst \batteryMassStageArg{i} \mrb{\dryMass+\sum_{j=i}^N\batteryMassStageArg{j}}^{-\frac{3}{2}}
\end{equation}
with total flight time over all stages
\begin{equation}
\flightTime = \sum_i \flightTimeArg{i}= \specificEnergy \flightTimeConst \sum_{i=1}^{N}\batteryMassStageArg{i} \mrb{\dryMass+\sum_{j=i}^N\batteryMassStageArg{j}}^{-\frac{3}{2}}
\label{eqTotalFlightTimeDiscreteStaged} .
\end{equation}
This equation shows that achievable flight time is directly proportional to the specific energy $\specificEnergy$ and depends on the stage masses $m_i$ in a nonlinear fashion.

\subsubsection{Equal staging}\label{sec:equalStage}
For a vehicle with dry mass $\dryMass$, and total energy storage mass $\batteryMassTotal$ split equally over $N$ stages, the flight time can be computed from \eqref{eqTotalFlightTimeDiscreteStaged} as
\begin{equation}
\flightTime = \frac{\specificEnergy \, \flightTimeConst \, \batteryMassTotal}{N} \sum_{i=1}^N \mrb{\dryMass + \frac{i}{N}\batteryMassTotal}^{-\frac{3}{2}} .
\label{eqFlightTimeFromMassEquallyStaged}
\end{equation}

Fig.~\ref{fig:equalStage} shows plots of (normalized) flight time vs. the ratio of total energy storage mass ($\batteryMassTotal$) to total initial vehicle mass ($\dryMass+\batteryMassTotal$), for various number of stages.
The case of equal staging is plotted using solid curves.
The normalization factor is the maximum achievable flight time for an infinitely-staged vehicle as derived in Section~\ref{sec:contStaging}.
Since the figure is normalized, it is valid for a multirotor with a different powertrain efficiency, or one flying on another planet with a different surface gravity and air density.

The plots in Fig.~\ref{fig:equalStage} and the results from equal staging can be used in the following ways.
For a given energy storage mass, one can fix the x-axis value, and decide the number of stages they want based on their flight time requirements.
On the other hand, the choice for designers might be flight time for their vehicle.
In that case, they can fix the y-axis value, and then choose the number of stages based on the mass constraints.

\subsubsection{Optimal staging order}\label{sec:optimalStagingOrder}
We consider the case of having a series of $N$ energy storage stages, of known, fixed but different masses, with the only design variable being the order in which to stage them.

\begin{theorem}
\label{thm:optimalStagingOrder}
The optimal flight time is achieved by staging in order of decreasing mass, so that the heaviest stage is depleted and discarded first.
\end{theorem}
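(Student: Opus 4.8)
The plan is a standard exchange argument. The first observation is that the total flight time \eqref{eqTotalFlightTimeDiscreteStaged} depends only on the order in which the $N$ stages are discarded, and that interchanging two stages that are discarded consecutively changes only two of the $N$ summands. Concretely, suppose a stage of mass $a$ is discarded $k$th and a stage of mass $b$ is discarded $(k{+}1)$th. Every summand of index $<k$ has both of these masses inside its $(\cdot)^{-3/2}$ factor (and its leading factor $m_i$ untouched), every summand of index $>k+1$ involves neither, so only the summands $i=k$ and $i=k+1$ change. Letting $M>0$ denote the dry mass plus the total mass of all stages discarded after the $(k{+}1)$th, the pair of affected summands contributes $a\,(M+a+b)^{-3/2}+b\,(M+b)^{-3/2}$ in the current order and $b\,(M+a+b)^{-3/2}+a\,(M+a)^{-3/2}$ after the interchange. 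So the theorem reduces to showing that, for every $M>0$ and every $a\ge b>0$,
\begin{equation}
a\,(M+a+b)^{-3/2}+b\,(M+b)^{-3/2}\ \ge\ b\,(M+a+b)^{-3/2}+a\,(M+a)^{-3/2}:
\label{eq:stagingPairwise}
\end{equation}
discarding the heavier of two consecutive stages first is never worse.

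To establish \eqref{eq:stagingPairwise}, I would rewrite it as $a\,[(M+a)^{-3/2}-(M+a+b)^{-3/2}]\le b\,[(M+b)^{-3/2}-(M+a+b)^{-3/2}]$, both bracketed terms being positive because $x\mapsto x^{-3/2}$ is decreasing. Using the identity $(M+x)^{-3/2}-(M+x+y)^{-3/2}=\tfrac{3}{2}\int_{0}^{y}(M+x+s)^{-5/2}\,ds$ on each side and then substituting $s=bv$ on the left and $s=av$ on the right, the inequality becomes, after cancelling the common positive factor $\tfrac{3}{2}ab$,
\begin{equation*}
\mint{0}{1}{(M+a+bv)^{-5/2}}{dv}\ \le\ \mint{0}{1}{(M+b+av)^{-5/2}}{dv}.
\end{equation*}
This holds integrand-by-integrand: for $v\in[0,1]$ and $a\ge b$ one has $(M+a+bv)-(M+b+av)=(a-b)(1-v)\ge 0$, so $M+a+bv\ge M+b+av$ and therefore $(M+a+bv)^{-5/2}\le(M+b+av)^{-5/2}$. (The value $\tfrac{3}{2}$ of the exponent plays no role; the same argument works for any positive power.)

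Finally I would close the loop by sorting. If a staging order is not in decreasing-mass order, it contains two consecutive stages with the lighter discarded first; by \eqref{eq:stagingPairwise}, interchanging them does not decrease $\flightTime$, while it strictly decreases the number of such inversions. Iterating therefore terminates at the decreasing-mass order, with flight time never having decreased, so that order is optimal; when the stage masses are distinct the inequality is strict and it is the unique optimum. The one genuinely non-routine step is \eqref{eq:stagingPairwise} itself: the two sides are not comparable term by term as written, and the obstacle is to find the integral representation together with the rescalings $s=bv$ and $s=av$ that map both integrals to $[0,1]$ and reduce the claim to a pointwise comparison of two affine functions of $v$.
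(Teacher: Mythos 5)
Your proof is correct, and its skeleton is the same as the paper's: an adjacent-exchange argument in which swapping two consecutively discarded stages changes only two summands of the total flight time, with everything hinging on the single pairwise inequality
\begin{equation*}
a\,(M+a+b)^{-\frac{3}{2}}+b\,(M+b)^{-\frac{3}{2}}\;\ge\;b\,(M+a+b)^{-\frac{3}{2}}+a\,(M+a)^{-\frac{3}{2}}\qquad (a\ge b>0,\ M>0).
\end{equation*}
Where you genuinely diverge is in how that inequality is established, and in how the local statement is globalized. The paper defines $q(m)=(M+m)^{-3/2}$, notes $q'<0$ and $q''>0$, and applies Lagrange's mean value theorem twice to sandwich $q'(m_{k+1})$ between the chord slopes of $q$ over $[m_k,m_{k+1}]$ and $[m_{k+1},m_k+m_{k+1}]$; rearranging the resulting chord-slope inequality gives the claim. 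You instead write each difference of powers as $\tfrac{3}{2}\int_0^y (M+x+s)^{-5/2}\,ds$, rescale both integrals onto $[0,1]$ so that the factor $\tfrac{3}{2}ab$ cancels, and finish with a pointwise comparison $(a-b)(1-v)\ge 0$. The two arguments rest on the same structural fact --- $-q'$ is positive and decreasing, i.e.\ $q$ is convex and decreasing --- but yours trades the MVT for an explicit integral representation, which makes the role of the exponent transparent (as you note, any negative power works, so the result is insensitive to the $3/2$ coming from the actuator-disk model) and yields a clean non-strict inequality with an identifiable equality case. For the global step, the paper argues by contradiction from an assumed optimum with one inversion, whereas you run a bubble-sort on inversions; your version has the minor advantage of not presupposing that an optimum exists (immediate here anyway, since there are finitely many orderings) and of delivering uniqueness when the masses are distinct. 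Both are complete; neither has a gap.
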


\begin{proof}
The proof follows by contradiction.
Assume that an optimal staging sequence is given by ${\mathcal{S}}^*=\mcb{m_1, m_2, \ldots, m_N}$ where for some value $k\in\mcb{1,\ldots,N}$ the stage $k$ is lighter than the following stage $k+1$, i.e. $m_k<m_{k+1}$.
Let ${T}^*$ represent the total flight time, computed with \eqref{eqTotalFlightTimeDiscreteStaged} for this sequence.

Let $\bar{\mathcal{S}} := \mcb{m_1, \ldots, m_{k-1}, m_{k+1}, m_{k}, m_{k+2}, \ldots, m_N}$ represent a modified staging order, which only interchanges the order of the original $k$th and $\left(k+1\right)$th stage.
Let $\bar{T}$ represent the associated total flight time.
We note that flight times from the first $k-1$ and the stages continuing after $k+2$ are identical.
We denote the short-hand $M:= \dryMass + \sum_{i=k+2}^N m_i$ as the total mass of the vehicle after discarding the first stages $(k+1)$ stages.
Then the difference in flight times is 
\begin{align}
\frac{{T}^*-\bar{T}}{\specificEnergy \flightTimeConst} = &  \frac{m_k}{\mrb{M + m_k + m_{k+1}}^{\frac{3}{2}}} + \frac{m_{k+1}}{\mrb{M + m_{k+1}}^{\frac{3}{2}}}\nonumber
\\ & - \frac{m_{k+1}}{\mrb{M + m_k + m_{k+1}}^{\frac{3}{2}}} - \frac{m_{k}}{\mrb{M + m_{k}}^{\frac{3}{2}}}
\\=&  \mrb{m_{k}-m_{k+1}} q\mrb{m_{k} + m_{k+1}}  \nonumber
\\&  + m_{k+1} q\mrb{m_{k+1}} - m_{k} q\mrb{m_{k}}
\label{eq:deltaTflightOne}
\end{align}
where we define $q\mrb{m}:=\mrb{M + m}^{-\frac{3}{2}}$ for convenience.
We note that
\begin{align}
  \tfrac{\partial}{\partial m} q\mrb{m} &=: q'\mrb{m} = -\tfrac{3}{2}\mrb{M+m}^{-\frac{5}{2}} < 0 ,
\\\tfrac{\partial^2 }{\partial m^2} q\mrb{m} &= \tfrac{15}{4}\mrb{M+m}^{-\frac{7}{2}} > 0 .
\end{align}

We use Lagrange's mean value theorem to relate the average slope of $q\mrb{m}$ between points $\{m_k, m_{k+1}\}$ with the average slope between points $\{m_{k+1}, m_k+m_{k+1}\}$ as
\begin{align}
\begin{split}
\frac{q\mrb{m_{k+1}}-q\mrb{m_k}}{m_{k+1}-m_k} &< q'\mrb{m_{k+1}} 
\\ & < \frac{q\mrb{m_k+m_{k+1}}-q\mrb{m_{k+1}}}{\mrb{m_k+m_{k+1}} - m_{k+1}}.
\end{split}
\label{eq:LMVT}
\end{align}
Rearranging this, and using $m_{k+1}>m_k>0$ gives
\begin{align}
& \mrb{m_k-m_{k+1}} q\mrb{m_k + m_{k+1}} \nonumber
\\ & + m_{k+1} q\mrb{m_{k+1}} - m_k q\mrb{m_k}  < 0 .
\end{align}
Substituting this in \eqref{eq:deltaTflightOne} gives $\bar{T}>T^*$.
However, this contradicts the assumption that $T^*$ is the optimal flight time.
Thus, $S^*$ cannot be an optimal staging sequence, and the optimal staging sequence has $m_1 \geq m_2 \geq \ldots \geq m_N$, proceeding from the heaviest stage first to lightest stage last.
\end{proof}

\subsubsection{Optimal mass partitioning}\label{sec:optPartition}
Given a dry mass $\dryMass$, an energy storage mass budget $\batteryMassTotal$, and a total number of stages $N$ to be used, it is of interest to find the optimal stage masses that add up to $\batteryMassTotal$, to maximize the flight time.

This is equivalent to maximizing $\flightTime$ in \eqref{eqTotalFlightTimeDiscreteStaged} over the decision variables $m_i$, under the inequality constraints $m_i~>~0$ and the equality constraint $\sum_i m_i = \batteryMassTotal$.

We define new decision variables, $x_i = \dryMass+\sum_{j=i}^N\batteryMassStageArg{j}$, so that $m_i = x_{i}-x_{i+1}$.
Substituting this in \eqref{eqTotalFlightTimeDiscreteStaged}, the flight time can be written in terms of the decision variables as
\begin{equation}
J = \sum_{i=1}^{N} \frac{x_{i} - x_{i+1}}{x_{i}^{\frac{3}{2}}}
\end{equation}
where $J$ is now our objective function to be maximized.
The constants $\specificEnergy$ and $\flightTimeConst$ are dropped because they are scaling factors which do not affect the decision variables.

This can be formulated as a constrained optimization problem as follows,
\begin{align}
\begin{split}
\max_{x_1, x_2, \cdots, x_{N+1}} \quad & J \\
\textrm{s.t.} \quad & x_1 = \dryMass + \batteryMassTotal , \\
& x_{N+1} = \dryMass , \\
& x_{i+1} < x_{i} \quad \textrm{for} \quad i = 1, 2, \cdots, N .
\end{split}
\label{eq:optimalPartitioningProblem}
\end{align}

Solving the above problem using the KKT conditions \cite{boyd2004convex}, we obtain a solution in the form of simultaneous nonlinear equations shown below,
\begin{align}
\begin{split}
\frac{1}{x_{i}^\frac{3}{2}} + \frac{2}{x_{i-1}^\frac{3}{2}} - \frac{3 x_{i+1}}{x_{i}^\frac{5}{2}} &= 0 \quad \textrm{for} \quad i=2,3,\cdots,N , \\
x_1 &= \dryMass + \batteryMassTotal , \\
x_{N+1} &= \dryMass .
\end{split}
\label{eq:optimalPartitioningSolution}
\end{align}

\begin{figure}
	\centering
	\includegraphics[width=\columnwidth]{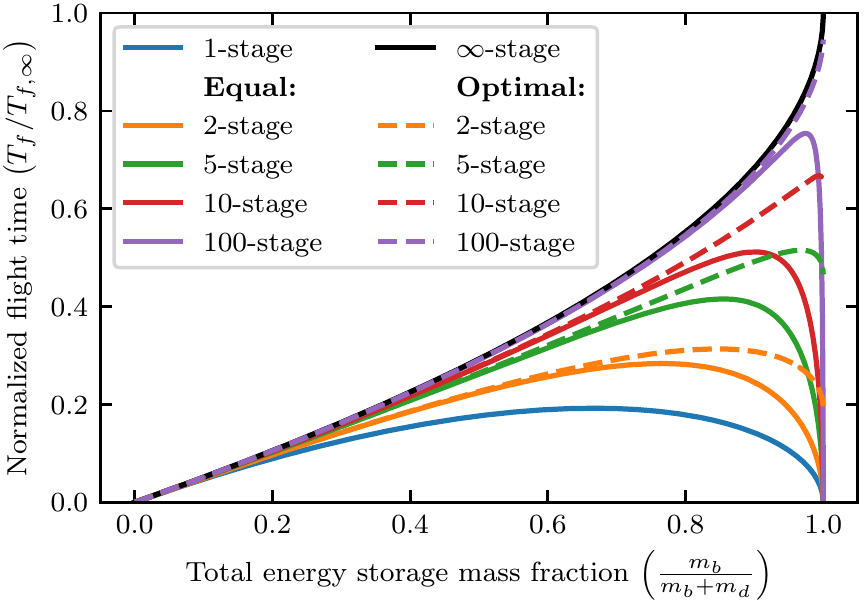}
	\caption{
		Effect of total energy storage mass on hovering flight time for various number of stages.
		Dry mass and total energy storage mass are respectively denoted by $\dryMass$ and $\batteryMassTotal$.
		Note that this plot is valid only for $\dryMass>0$, and the energy storage fraction reaches $1.0$ in the limit $\batteryMassTotal \rightarrow \infty$.
		Solid lines are for equally staged energy sources (see Section~\ref{sec:equalStage}), whereas dashed lines are for optimally staged (see Section~\ref{sec:optPartition}).
		The flight times are normalized with respect to that for continuously staged case with infinite energy storage mass $\flightTimeArg{\infty}$ (see \eqref{eq:flightTimeInfty}).
		Normalization makes this plot valid for any multirotor flying in any environment.
	}
	\label{fig:equalStage}
\end{figure}

Even for the simplest case of $N=2$, this requires the roots of a fifth-degree polynomial, for which no closed-form expression exists.
Nonetheless, the simultaneous equations can be solved numerically.
Plots for a few sample $N$ values are shown in Fig.~\ref{fig:equalStage} as dashed lines to compare with the equally staged case (solid lines).
For $N=$ 2, 3, 4, and 5, the predicted maximum achievable flight time is higher by 10.5\%, 16.9\%, 21.1\%, and 24.0\% respectively for optimal staging as compared to equal staging.

From the discrete staging analysis, we conclude that staging can be beneficial in the following ways.
First, for the same amount of energy, we can increase the flight time as compared to a single-stage vehicle.
This is to be expected because as the multi-stage vehicle ejects stages, its mass and consequently power consumption is reduced.
Since the total energy is the same in both cases, the flight time of the multi-stage vehicle is higher.
Second, to achieve the same flight time, a multi-stage vehicle will be lighter and more compact, by a similar argument, which could make it safer.
Third, in the case of any multi-stage UAV, the mass and moment of inertia reduces with every stage ejected, increasing its agility.

\subsection{Continuous staging}\label{sec:contStaging}
A vehicle powered by combustion, where the combustion products are exhausted, can be described as a limiting case of the above, with continuous staging. 
We consider two forms of this: using a combustion engine to provide power to rotors, and a reaction (e.g. rocket) engine.

\subsubsection{Internal combustion engines}
When burning (and exhausting) fuel to perform the mechanical work of driving propellers, the remaining mass of the energy storage $m_\energy$ will evolve at a rate proportional to the power consumption,
\begin{equation}
\dot{m}_\energy = -\frac{1}{\specificEnergy} \dot{\energy} = -  \frac{\powerConst g^\frac{3}{2}}{2\specificEnergy}  \mrb{\dryMass+m_\energy}^\frac{3}{2}
\end{equation}
where we've used \eqref{eq:PowerFromMass} and again assumed a symmetric quadcopter at hover.
Solving this, and substituting $m_\energy\mrb{0}=\batteryMassTotal$, and $m_\energy\mrb{\flightTime}=0$ gives the total flight time $\flightTime$ as
\begin{equation}
\flightTime = \frac{4\specificEnergy}{\powerConst g^\frac{3}{2} \sqrt{\dryMass}}\mrb{1-\mrb{1+\frac{\batteryMassTotal}{\dryMass}}^{-\frac{1}{2}}} .
\end{equation}
This flight time value can also be derived by taking the limit as $N \rightarrow \infty$ in \eqref{eqFlightTimeFromMassEquallyStaged}.

Note that the flight time is, as may be expected, monotonically increasing in the initial fuel mass $\batteryMassTotal$. 
However, there exists a natural upper limit to achievable flight time, even for arbitrarily large quantities of fuel:
\begin{equation}
\flightTime < \flightTimeArg{\infty} = \lim_{\batteryMassTotal \rightarrow \infty} \flightTime = \frac{4\specificEnergy}{\powerConst g^\frac{3}{2} \sqrt{\dryMass}} = \frac{2 \specificEnergy \flightTimeConst}{\sqrt{\dryMass}}.
\label{eq:flightTimeInfty}
\end{equation}
This reveals a fundamental limit of rotor propulsion. Even under the best case of continuous staging, achievable flight time has an upper limit no matter how much fuel is used, for a specific vehicle (fixed $\powerConst$), energy source (fixed $\specificEnergy$), fixed environment and fixed dry mass.
In practice, other constraints would act to limit achievable performance, e.g. thrust and structural limitations.

We also note that as the dry mass is reduced, the flight time increases, with the limit $\flightTime \rightarrow \infty$ as $\dryMass \rightarrow 0$.

\subsubsection{Reaction engines}\label{sec:reactionEngine}
Another method of achieving continuous staging is reaction engines (such as rocket motors), where propellant is expelled at high velocity, producing thrust in response to the momentum flux leaving the engine. 
A simple analysis, assuming that fuel is expelled at a constant exhaust velocity $\exhaustVelocity$, gives a thrust force as a function of the fuel mass flow rate $\dot{m}_f$ as
\begin{equation}
f = \dot{m}_f \exhaustVelocity .
\end{equation}
For a quadcopter-like rocket ship, with four rocket engines symmetric around the center of mass, we have that 
\begin{equation}
4 \dot{m}_f \exhaustVelocity = -g\mrb{\dryMass + m_f} .
\end{equation}
Solving this, and setting again $m_f(0)=\batteryMassTotal$, the achievable hover flight time is 
\begin{equation}
\flightTimeArg{\mathrm{rocket}} = \frac{4 \exhaustVelocity}{g} \ln \mrb{1+\frac{\batteryMassTotal}{\dryMass}} .
\label{eq:flightTimeReactionEngine}
\end{equation}
This result is very closely linked to the Tsiolkovsky rocket equation \cite{turner2008rocket} which describes the achievable velocity change as a function of fuel burnt, with the same functional form.

We note that this achievable flight time will grow unbounded with increasing fuel mass $\batteryMassTotal$.
Thus, rocket propulsion does not suffer from the same fundamental limitation as aerodynamic propulsion. 
This suggests that, for certain extreme design specifications, it may be preferable to create a rocket-propelled multirotor-style robot, rather than using aerodynamic propulsion.

Comparing the propeller thrust relation of \eqref{eqFlightTimeFromMass} to that for reaction engines \eqref{eq:flightTimeReactionEngine}, it is notable that the exponent for gravity is different.
Combining this with the dependence on air density, the analysis suggests that for environments with high density and low gravity (e.g. Titan), aerodynamic propulsion might be the preferred choice \cite{lorenz2018dragonfly}.
But, for environments with low air density (e.g. Mars), a rocket-propelled system may have superior flight time for a given system mass and size.

\section{Experimental Hardware Design}\label{sec:design}

In this section, we explain the design of the quadcopter used in our experiments, the battery staging mechanism, and the battery switching circuit.

\subsection{Vehicle design}\label{sec:vehDesign}

The quadcopter is designed to have enough payload capacity for carrying useful sensors such as surveillance cameras, or environmental sensors.
Its dry mass is \SI{565}{\gram}, and it can generate a maximum thrust of \SI{27}{\newton}.
Its arm length is \SI{165}{\milli \meter}, and it uses four \SI{203}{\milli \meter} diameter propellers.
We stack a battery switching circuit on top of the quadcopter for staging.
The quadcopter is powered by two batteries. The first-stage battery is placed at the bottom of the quadcopter that can be ejected when depleted, and the second-stage battery is placed at the center of the quadcopter which always stays onboard. Copper plates mounted at the bottom serve as input leads from the first-stage battery.
Fig.~\ref{fig:batteryDrop} shows a picture of the quadcopter.
We use two types of batteries -- a ``heavy'' 3S {\SI{2.2}{\ampereHour}} lithium polymer (LiPo) battery weighing {\SI{190}{\gram}}, and a ``light'' 3S {\SI{1.5}{\ampereHour}} LiPo battery weighing {\SI{135}{\gram}}.
We use the heavy battery in both the first and second stages for experimental comparison between staged and unstaged cases.
To verify the optimal order of staging, we use the heavy and light battery in the two possible permutations.

\subsection{Staging mechanism}\label{sec:stageMech}

\begin{figure}
	\centering
	\includegraphics[width=0.9\columnwidth]{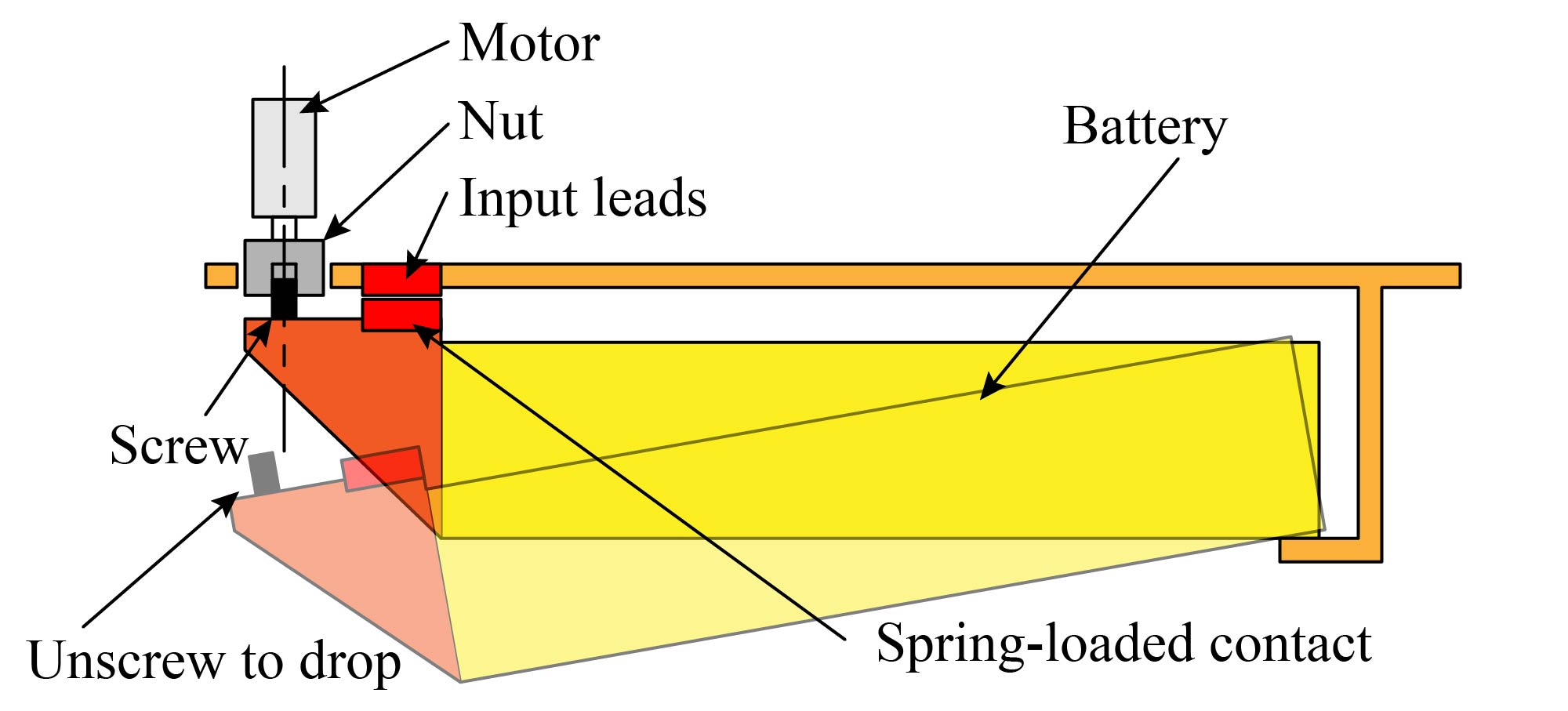}\\
	\vspace{5mm}
	\includegraphics[width=0.9\columnwidth]{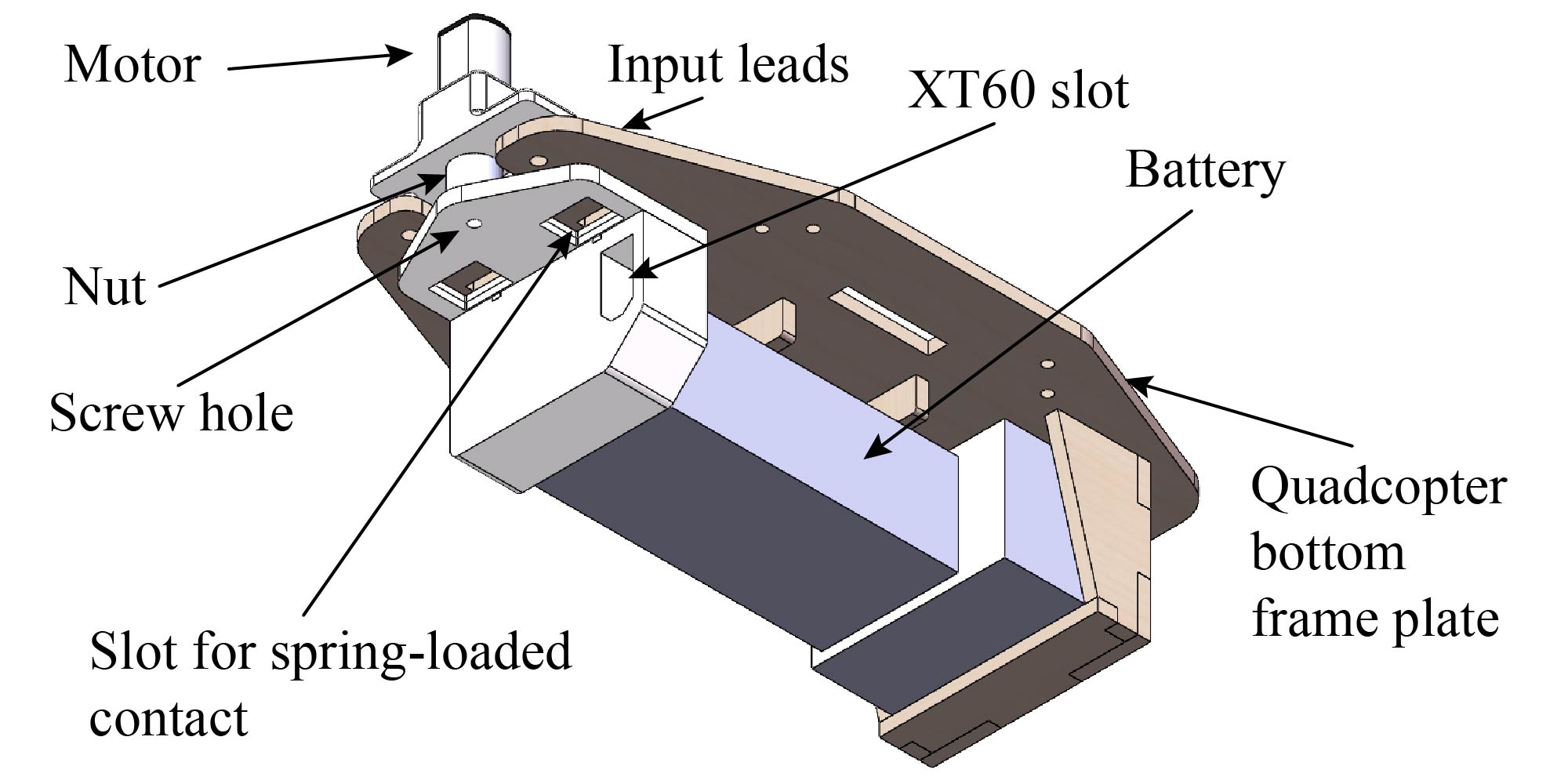}
	\caption{
		(Top:) Schematic of the battery-dropping mechanism.
		(Bottom:) CAD model of the mechanism.
	}
	\label{fig:battDropMech}
\end{figure}

The staging mechanism is shown in Fig.~\ref{fig:battDropMech}.
The mechanism is placed on the lower section of the quadcopter frame.
In order for the first-stage battery to engage with the mechanism, we have an attachment on one end of the battery.
This attachment has a bolt, and two spring-loaded contacts, which are electrically connected to the output leads of the first-stage battery.
On the side of the quadcopter frame, a ``detaching'' motor is installed, which has a nut fixed on its shaft.
Next to the motor are two copper plates that serve as input leads from the first-stage battery.
The staging mechanism adds a total of \SI{45}{\gram} mass to the quadcopter -- \SI{30}{\gram} to the first-stage battery and \SI{15}{\gram} to the quadcopter frame.
This corresponds to a mass increase of 8\%.

To connect the first-stage battery, we engage the bolt from the battery with the nut on the motor shaft.
This connects the spring-loaded contacts to the input leads.
To disconnect the battery, we command the motor to unscrew the bolt.
When the bolt is completely disengaged from the nut, the battery loses support and thus drops, and then softly lands using an attached parachute.

\subsection{Battery switching circuit}\label{sec:batterySwitch}

The design of our battery switching circuit is inspired by \cite{jain2019flying}.
Since our system is flying, we cannot afford to cut the power supply when switching from the first battery to the second battery.
The two batteries need to be connected in parallel for some time to achieve this.
We connect diodes in series with each of the batteries to avoid reverse currents due to the voltage difference between the batteries.

A normally closed relay is connected in series with the first battery.
By opening the switch, we can draw power from the first battery even when it is at a lower voltage than the second battery.
This is necessary to completely consume the first battery, before starting the use of the second battery.
The relay coil is connected across the first battery input leads in series with a MOSFET.
The gate terminal of the MOSFET is connected to a GPIO pin on the flight controller to control the relay.
Fig.~\ref{fig:battSwitcher} shows a schematic diagram of the battery switching circuit.
The circuit weighs \SI{60}{\gram}.
We note that this circuit is built on a prototype board and can potentially be made much lighter using smaller electronics on a PCB.

\begin{figure}
	\centering
	\includegraphics[width=\columnwidth]{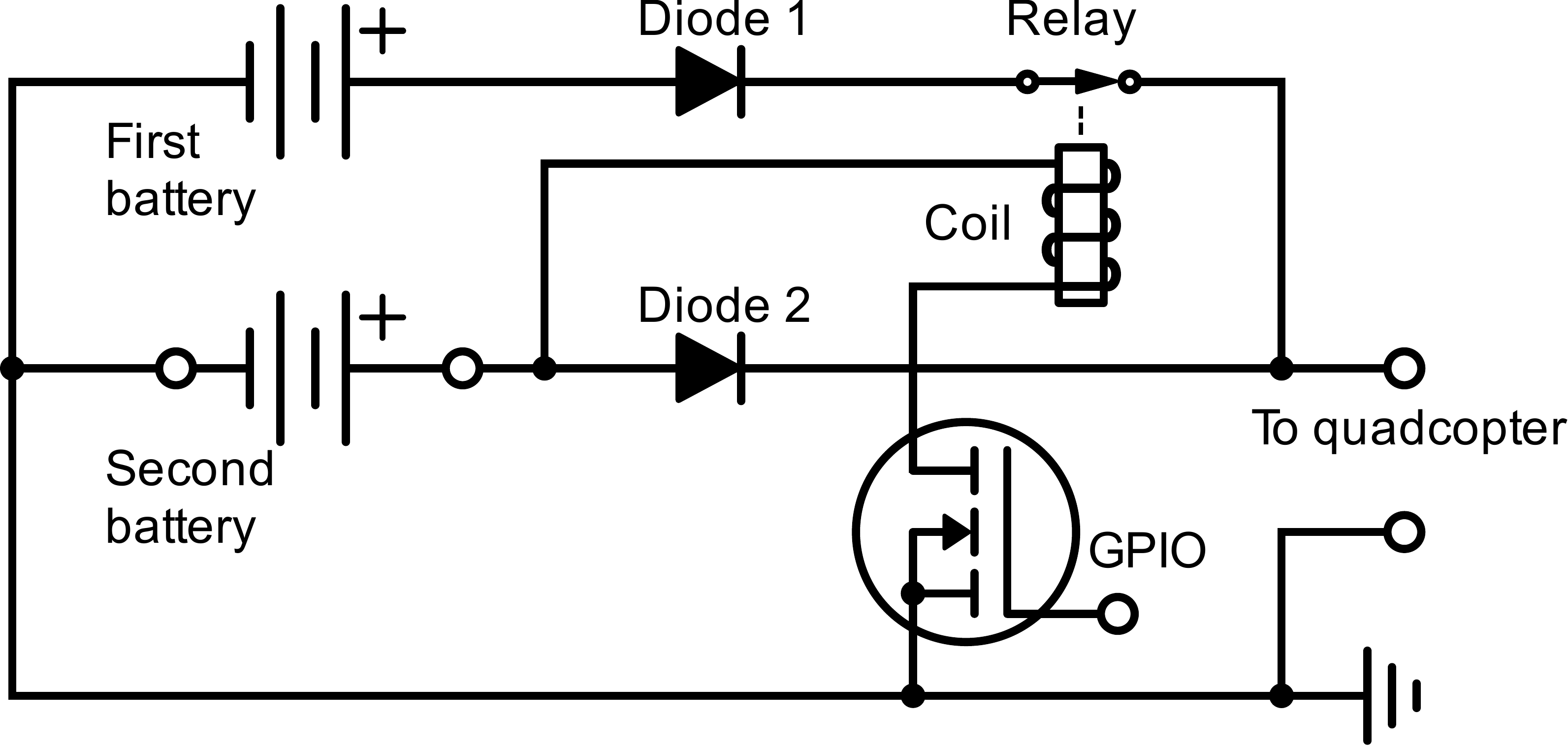}
	\caption{Schematic of the battery switching circuit.}
	\label{fig:battSwitcher}
\end{figure}

\section{Experimental Validation}\label{sec:experiment}

We validate the staging analysis by first predicting the flight times for our quadcopter with and without staging.
We then conduct hovering flight experiments for single-stage and two-stage cases.
The flight time for the two-stage case is shown to be higher.
The optimal order of staging is also validated using the heavy and light batteries in two different orders.
All experimental flight times are also compared with the predictions from the analysis and are shown to match well, validating the analysis.
An experiment with the quadcopter ejecting a stage while traversing a circular trajectory is also conducted to show that the effect on maneuvering is negligible.
This can be seen in the video attachment.

\subsection{Flight time prediction from analysis}

We use the results from equal staging (Section~\ref{sec:equalStage}) since we use the same type of battery (heavy) for both stages.
The values of specific energy and $\flightTimeConst$ are determined empirically from flight experiments on the quadcopter, which respectively are $\specificEnergy=\SI{130}{\wattHour \per \kilogram}$ and $\flightTimeConst = \SI{6.2e-3}{\kilogram \powerThreeHalfs \per \watt}$.
Using Eq.~\eqref{eqFlightTimeFromMassEquallyStaged} with $\batteryMassTotal=\SI{380}{\gram}$ and $\dryMass={\SI{595}{\gram}}$ (includes additional {\SI{30}{\gram}} from the first-stage battery), we predict flight times of {\SI{19.1}{\minute}} for the single-stage case, and {\SI{22.8}{\minute}} for the two-stage case.

\subsection{Experimental setup}

The quadcopter used in our experiments is localized via sensor fusion of a motion capture system and an onboard rate gyroscope \cite{jain2019modeling}.
Experimental data from the motion capture system, voltage sensor, and the current sensor is logged via radio for post-processing.
We control the quadcopter using a cascaded position and attitude controller shown in Fig.~\ref{fig:ctrler}.
The measured battery voltage is used to decide on commands for battery switching, stage ejection, and landing.

\begin{figure}
	\centering
	\includegraphics[width=\columnwidth]{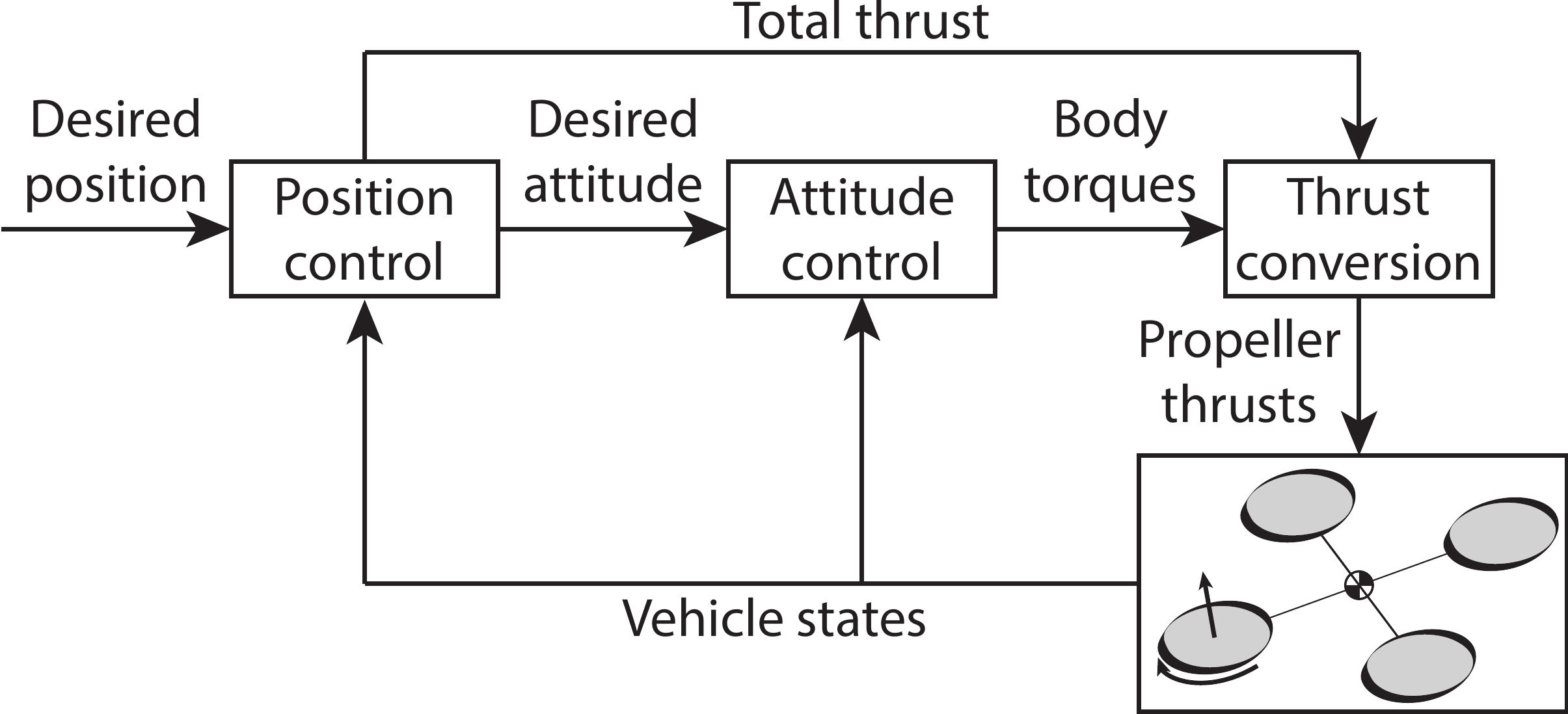}
	\caption{Block diagram of the quadcopter controller.}
	\label{fig:ctrler}
\end{figure}

\subsection{Demonstration}\label{sec:demonstration}

\begin{figure*}
	\centering
	\includegraphics[width=\textwidth]{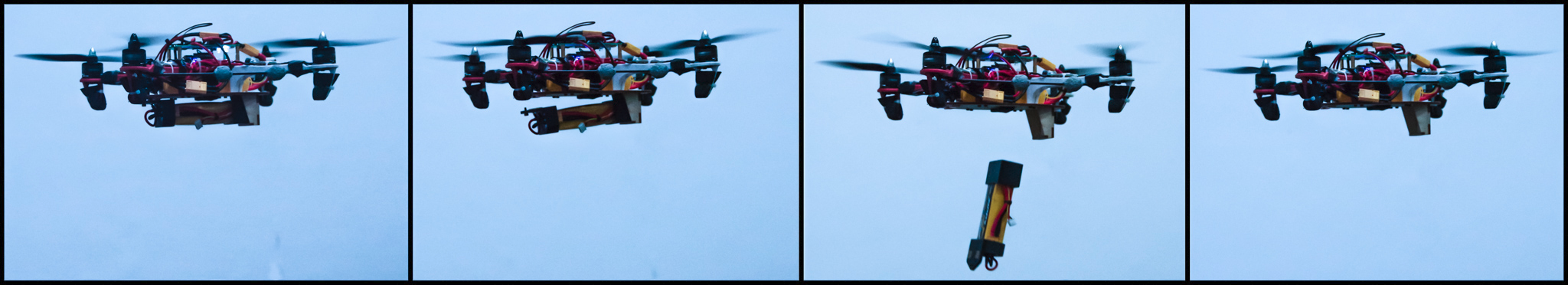}
	\caption{Sequence of images from the two-stage flight experiment. From left to right:
		(a) Quadcopter hovers using only first-stage battery.
		(b) First-stage battery is depleted. Quadcopter starts using second-stage battery. The detaching motor is simultaneously activated to disengage the bolt. First stage is about to fall.
		(c) First stage is ejected and falls towards the ground.
		(d) Quadcopter continues hovering using the second-stage battery, with reduced mass and power consumption.
		\textbf{Note:} We have not attached a parachute in this case to clearly show how the staging works.
		The first stage fall is broken softly by a net below the quadcopter.
		In real-life settings, it is advised to use some fall breaking mechanism to not damage the energy source or any property (for e.g. by using a parachute as shown in Fig.~\ref{fig:batteryDrop}).
		All reported experimental flight times and plots use a parachute in the two-stage case.
	}
	\label{fig:batteryDropFilmReel}
\end{figure*}

The benefit of staging batteries is demonstrated by conducting two types of experiments (shown in the video attachment): one without staging (single-stage), and the other with staging (two stages).
We conduct three flight experiments for each type.
All of our experiments consume the batteries from a fully charged state of {\SI{4.2}{\volt}} per cell to a fully discharged state of {\SI{3.0}{\volt}} per cell.
For the 3 cell batteries that we use, the corresponding voltages are {\SI{12.6}{\volt}} and {\SI{9.0}{\volt}}.

In the single-stage experiments, the quadcopter hovers using the two batteries simultaneously until both the batteries are completely discharged.
In the three single-stage experiments, the quadcopter hovered for an average time of {\SI{19.16}{\minute}}, with a standard deviation of {\SI{0.18}{\minute}}.

In the two-stage experiments, the quadcopter initially hovers using the first battery.
Once the first battery is discharged, the second battery is connected to the quadcopter by closing the relay in the switching circuit.
The first stage is then detached by activating the detaching motor in the staging mechanism.
Hovering continues until we completely consume the second battery.
Fig.~\ref{fig:batteryDropFilmReel} shows a sequence of snapshots from the two-stage experiment.
In the three two-stage experiments, the quadcopter hovered for an average time of {\SI{22.80}{\minute}}, with a standard deviation of {\SI{0.17}{\minute}}.

\subsection{Validation of optimal order}\label{sec:optOrderValidation}
We experimentally validate the optimal order of staging result from Section~\ref{sec:optimalStagingOrder} by conducting flight experiments on the same quadcopter using the heavy and light batteries.
These batteries had an experimentally determined specific energy of $\specificEnergy={\SI{120}{\wattHour \per \kilogram}}$.

Using \eqref{eqTotalFlightTimeDiscreteStaged} with same $\flightTimeConst$ as before, and $\dryMass={\SI{595}{\gram}}$, we predict flight times of {\SI{19.0}{\minute}} with $m_1={\SI{135}{\gram}}$ and $m_2={\SI{190}{\gram}}$ (light battery on first-stage), and {\SI{19.3}{\minute}} with $m_1={\SI{190}{\gram}}$ and $m_2={\SI{135}{\gram}}$ (heavy battery on first-stage).
As expected from the analysis of optimal order, we get a higher flight time when the heavier battery is ejected first.

Three flight experiments were conducted for each order.
We observed an average flight time of {\SI{19.00}{\minute}} when the light battery was ejected first, with a standard deviation of {\SI{0.14}{\minute}}.
When the heavy battery was ejected first, the average flight time was {\SI{19.47}{\minute}}, with a standard deviation of {\SI{0.32}{\minute}}.
These experiments are in agreement with the predictions, validating the optimal staging order analysis.

\subsection{Discussion}\label{sec:discussion}
The plots of input voltage and power vs. time for sample experiments from the single-stage and two-stage case are shown in Fig.~\ref{fig:voltagePower}.
We see the characteristic LiPo battery discharge curve \cite{navarathinam2011characterization} in each of the voltage vs. time plots.

\begin{figure}
	\centering
	\includegraphics[width=\columnwidth]{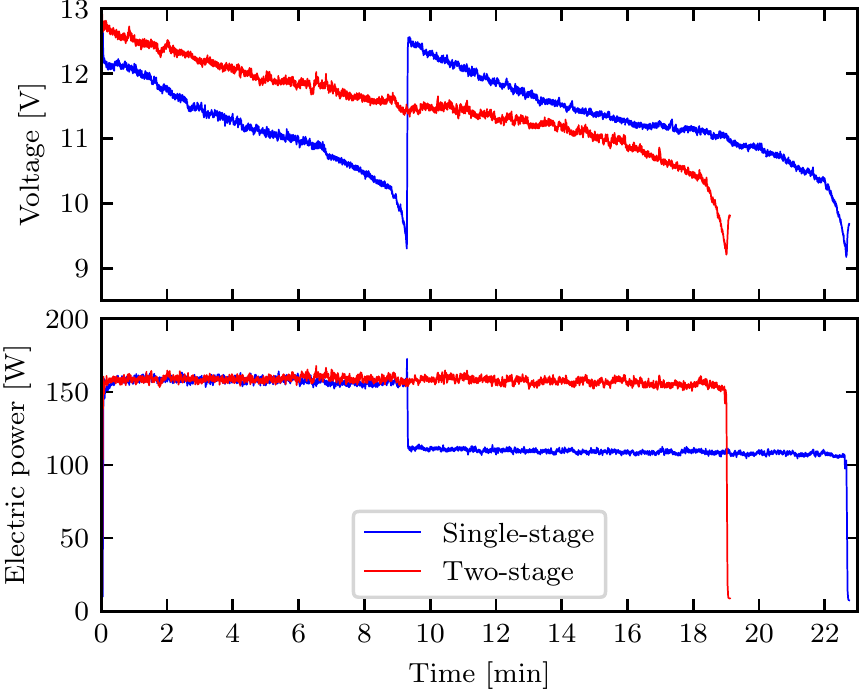}
	\caption{
		Input voltage and power vs. time measured at the electrical input to the quadcopter in typical experiments with the heavy batteries.
	}
	\label{fig:voltagePower}
\end{figure}

We observe that power input to the quadcopter remains approximately constant, as long as its mass is not changing.
For the single-stage experiments, the mass is constant for the entire experiment.
For the two-stage experiments, the power input remains approximately constant until the first battery is completely discharged.
Once we detach the first stage, the quadcopter mass is lower and the power consumption settles to a lower value corresponding to the reduced mass.

An important observation here is that energy consumption (calculated from power vs. time data) is similar for the single-stage and two-stage experiments with the heavy batteries.
This is expected since we have the same total energy in all cases.
The average energy consumption in the single-stage experiments is \SI{1.819e5}{\joule} and in the two-stage experiments is \SI{1.761e5}{\joule}.
We note that the values mentioned here are the electrical energy utilized by the quadcopter.
In the single-stage experiments, we are using two batteries in parallel which, in essence, halves the internal resistance, as compared to the two-stage experiments, where we are using one battery at a time.
This increases the electrical efficiency, and thus the quadcopter is able to utilize more output energy for the same input energy in the single-stage experiments.

The flight time predictions match the experimental values within 5\%.
This validates that the assumptions used in the staging analysis are, in fact, realistic.

To be fair in comparing the flight times, we scale up the flight time of the single-stage experiments by a factor to account for the additional mass of the staging mechanism.
This factor is obtained from the staging analysis by taking a ratio of the flight times with $\dryMass={\SI{550}{\gram}}$ to flight time with $\dryMass={\SI{595}{\gram}}$.
Using $\batteryMassTotal={\SI{380}{\gram}}$, the factor equals $1.07$.
The scaled single-stage flight time is {\SI{20.56}{\minute}}.
This shows that using the battery in stages is beneficial for flight time, even with the additional mass of the staging mechanism.

For an illustration, we compute the battery mass required for a single-stage vehicle to give us the same flight time as is obtained with two stages.
Using Eq.~{\eqref{eqFlightTimeFromMassEquallyStaged}} with $\flightTime={\SI{22.8}{\minute}}$, $\dryMass={\SI{550}{\gram}}$ (we exclude staging mechanism's mass), $N=1$, and $\specificEnergy$ and $\flightTimeConst$ same as before, we calculate $\batteryMassTotal={\SI{525}{\gram}}$.
So a single-stage vehicle would need an additional {\SI{145}{\gram}} of battery to achieve the same flight time as a two-stage vehicle.
Even considering the additional mass added by the staging mechanism, we save about {\SI{100}{\gram}} or about 18\% of the dry mass by using just one additional stage.

As a consequence of the increased flight time, the advantage of our proposed staging approach is a single flight, and hence a single continuous stream of data from sensors.
Furthermore, any additional flight time could potentially lead to using just one multirotor instead of two or more with identical sensors, which is cost effective.

We note that the potential gains of using more stages are even higher, with certain flight times not achievable by a single-stage vehicle, as seen in Fig.~\ref{fig:equalStage}.
Since higher flight times can be achieved via staging without adding as much battery mass as required for a single stage, usually none or minimal changes to the vehicle's structure and powertrain would be required.
Moreover, after each battery ejection, the inertia of the vehicle reduces, thereby increasing its agility.

\vspace{0.1mm}
\section{Environmental Impact and Use Cases}\label{sec:envImpact}
It is important to note that there are potentially severe environmental effects to take into consideration when utilizing staged batteries, which limit their utility to special cases.
Batteries ejected in mid-flight pose an immediate safety risk due to their momentum and kinetic energy when they reach the ground.
Use of parachutes, as in the proof-of-concept in this paper, reduces this risk but does not eliminate it, and moreover may make it harder to predict where the battery reaches the ground as external disturbances from the wind have a larger effect.
Another major concern is the environmental impact of negligent disposal of batteries, including the substantial risk of impact ignition for batteries and the long-term pollution from the battery \cite{wang2014economic}.
Moreover, most batteries are designed to be re-usable, meaning that discarding them without recovery would be wasteful.

For these reasons, application of staged batteries will likely be confined to specialized environments and circumstances.
Examples may include long-duration surveillance of, e.g., industrial sites, where discarded batteries can be periodically collected for re-use and there is no danger of injuring third parties;
use in emergencies such as in the event of a natural disaster, where the immediate value of longer-duration flight (such as transmitting warnings) out-weighs the longer-term impacts;
or in extreme environments such as for space exploration missions on other celestial bodies.

\vspace{0.1mm}
\section{Conclusion and Future Work} \label{sec:conclusion}

In this paper, we have introduced the concept of staging energy sources for UAVs, specifically considering multirotors for our analysis.
The idea is to discard the energy sources that cannot supply power anymore.
This reduces the mass of the vehicle, thereby reducing power consumption.
This, in turn, increases the overall flight time of the vehicle.

We presented a model to predict the flight time of a multi-stage multirotor with given physical parameters related to the energy source and power consumption.
We then analyzed two specific cases of optimal staging.
First, given $N$ energy storage stages of fixed, known masses, we proved that flight time can be maximized by staging them in order of decreasing mass, with the heaviest stage being depleted and ejected first.
Second, given an energy storage mass budget, finding the optimal way of partitioning it into $N$ stages to maximize the flight time.
This problem does not have a closed-form solution.
Numerical results were presented and compared with the equal staging case.

We also presented an analysis for continuous staging for two cases: internal combustion engines driving propellers, and reaction engines.
We observed that there is a fundamental limit on flight time when using I.C. engines, even when using an unlimited amount of fuel, which does not exist for reaction engines.
Their dependence on air density and gravity is also fundamentally different, which would typically lead to one of them being preferable depending on the environment.

The staging analysis and optimal order of staging were validated experimentally by conducting flight experiments on a custom-designed two-stage quadcopter with a staging mechanism to detach a used battery in-flight.
Experimental flight times matched well (within 5\%) with the predicted flight times.
The two-stage case showed a 19\% higher flight time for the same battery mass than the single-stage case (or 11\% after correcting for the additional mass of the staging mechanism).
We then discussed the environmental impact of discarding used stages, and specialized areas where the staging concept can be utilized.

Future work may include the use of additional stages to further increase the flight time. %
This would entail a more intricate design for ejection of each stage individually and for seamless switching between multiple energy sources.
Another extension could be exploring the design and control of a multirotor using reaction engines as thrusters.
This would require careful design to have actuation in the desired degrees of freedom.
A third extension could be the planning of locations for stage ejections in an industrial setting, so that collection of these ejected stages can be localized and potentially automated.

\section*{Acknowledgment}
We gratefully acknowledge financial support from NAVER LABS and Code42 Air.

The experimental testbed at the HiPeRLab is the result of contributions of many people, a full list of which can be found at \url{hiperlab.berkeley.edu/members/}.

\balance
\bibliographystyle{IEEEtran}
\bibliography{main}

\end{document}